\numberwithin{equation}{section}
\definecolor{labelkey}{gray}{.65}
\title[The Entanglement Entropy of Causal Diamonds]
{The Fermionic Entanglement Entropy of Causal Diamonds in Two-Dimensional Minkowski Space}
\author[F.\ Finster]{Felix Finster}
\address{Fakult\"at f\"ur Mathematik  Universit\"at Regensburg  D-93040 Regensburg  Germany}
\email{finster@ur.de}
\author[M.\ Lottner]{Magdalena Lottner}
\address{Fakult\"at f\"ur Mathematik  Universit\"at Regensburg  D-93040 Regensburg  Germany}
\email{magdalena.lottner@freenet.de}
\author[A.\ Much]{Albert Much}
\address{Institut f\"ur Theoretische Physik\\ Universit\"at Leipzig\\ D-04103 Leipzig \\ Germany}
\email{much@itp.uni-leipzig.de}
\author[S.\ Murro]{Simone Murro \\ \\ July 2024  / September 2025}
\address{Dipartimento di Matematica, Universit\`a di Genova, I-16146 Genova Italy}
\email{simone.murro@unige.it}
\newtheorem{Def}{Definition}[section]
\newtheorem{Thm}[Def]{Theorem}
\newtheorem{Prp}[Def]{Proposition}
\newtheorem{Lemma}[Def]{Lemma}
\newtheorem{Example}[Def]{Example}
\newtheorem{Condition}[Def]{Condition}
\newcommand{\Thanks}{\vspace*{.5em} \noindent \thanks}
\newcommand{\beq}{\begin{equation}}
	\newcommand{\eeq}{\end{equation}}
\newcommand{\Proof}{\begin{proof}}
	\newcommand{\QED}{\end{proof} \noindent}
\newcommand{\la}{\langle}
\newcommand{\ra}{\rangle}
\newcommand{\Sl}{\mbox{$\prec \!\!$ \nolinebreak}}
\newcommand{\Sr}{\mbox{\nolinebreak $\succ$}}
\newcommand{\C}{\mathbb{C}}
\newcommand{\R}{\mathbb{R}}
\newcommand{\1}{\mathds{1}}
\newcommand{\Z}{\mathbb{Z}}
\DeclareMathOperator{\tr}{tr}
\newcommand{\T}{{\mathcal{T}}}
\newcommand{\Cisc}{C^\infty_{\text{\rm{sc}}}}
\newcommand{\Dir}{{\mathcal{D}}}
\renewcommand{\H}{\mathscr{H}}
\DeclareMathOperator{\supp}{supp}
\newcommand{\scrM}{\mycal M}
\newcommand{\scrD}{\mycal D}
\newcommand{\scrN}{\mycal N}
\newcommand{\bitem}{\begin{itemize}[leftmargin=2.5em]}
	\newcommand{\eitem}{\end{itemize}}
\newcommand{\loc}{\text{\rm{loc}}}
\newcommand{\Opa}{\mathrm{Op}_\alpha}
\newcommand{\Op}{\mathrm{Op}}
\newcommand {\BS}{\mathbf S}
\newcommand{\bl}{{\,\vrule depth3pt height9pt}{\vrule depth3pt height9pt}
	{\vrule depth3pt height9pt}{\vrule depth3pt height9pt}\,}
\newcommand {\bxi}{\xi}
\newcommand {\boldeta}{\eta}
\newcommand{\Renyi}{R{\'e}nyi }
\newcommand{\CA}{\mathcal A}
\newcommand{\CB}{\mathcal B}
\DeclareFontFamily{OT1}{rsfso}{}
\DeclareFontShape{OT1}{rsfso}{m}{n}{ <-7> rsfso5 <7-10> rsfso7 <10-> rsfso10}{}
\DeclareMathAlphabet{\mycal}{OT1}{rsfso}{m}{n}
\newcommand{\PiRI}{\Pi^{(\varepsilon)}}
\newcommand{\SN}{\BS}
\begin{document}
\maketitle

\begin{abstract}
The fermionic R\'enyi entanglement entropy is studied for causal diamonds in two-dimensional Minkowski space. Choosing the quasi-free state describing the Minkowski vacuum with an ultraviolet regularization, a logarithmically enhanced area law is derived.
\end{abstract}

\tableofcontents

\section{Introduction} \label{secintro}
Entropy quantifies the disorder of a physical system.
There are various notions of entropy, like the entropy in classical statistical mechanics
as introduced by Boltzmann and Gibbs, the Shannon and R{\'e}nyi entropies in information theory
or the {\em{von Neumann entropy}} for quantum systems.
In the past years, many studies have been devoted to the {\em{entanglement entropy}},
being a measure for the quantum correlations between subsystems of a
composite quantum system~\cite{amico-fazio, horodecki}.
In the relativistic setting, the {\em{relative entropy}} has been studied extensively
in connection with modular theory
(see for example~\cite{hollands-sanders, galanda2023relative, witten-entangle}).
In the present paper we restrict attention to the {\em{fermionic}} case.
Moreover, for simplicity we only consider the {\em{quasi-free}} case where the particles do not
interact with each other. This makes it possible to express the entanglement entropy
in terms of the reduced one-particle density operator~\cite{helling-leschke-spitzer}
(for details in an expository style see the survey paper~\cite{fermientropy}).
Based on methods first developed in~\cite{widom1},
this setting has been studied extensively for a free Fermi gas formed of non-relativistic spinless
particles~\cite{helling-leschke-spitzer, leschke-sobolev-spitzer, LSS_2022}.
The main interest of these studies lies in the derivation of {\em{area laws}}, which quantify
how the entanglement entropy scales as a function of the size of the spatial region forming the
subsystem. More recently, these methods and results were adapted to the
relativistic setting of the Dirac equation. In~\cite{arealaw} the entanglement entropy for the free Dirac field in
a bounded spatial region of Minkowski space is studied.
An area law is proven in two limiting cases: that the volume tends to infinity
and that the regularization is removed. Moreover, in~\cite{bhentropy}
the geometry of a Schwarzschild black hole is studied. The entanglement entropy of the event horizon is
computed to be a prefactor times the number of occupied angular momentum modes.
Independently, the entanglement entropy for systems of Dirac spinors has been
studied in~\cite{bollmann-mueller, bollmann-mueller2}.

In the present paper we study the entanglement entropy of a {\em{causal diamond}}~$\scrD$
embedded in {\em{two-dimensional Minkowski space}}~$\scrM$
(see Figure~\ref{figpos1}).
\begin{figure}
\psset{xunit=.5pt,yunit=.5pt,runit=.5pt}
\begin{pspicture}(251.00544607,225.44960322)
{
\newrgbcolor{curcolor}{0.80000001 0.80000001 0.80000001}
\pscustom[linestyle=none,fillstyle=solid,fillcolor=curcolor]
{
\newpath
\moveto(18.27661984,112.62883094)
\lineto(125.50273134,221.82610968)
\lineto(225.86058709,114.56064952)
\lineto(125.50273134,6.05324039)
\closepath
}
}
{
\newrgbcolor{curcolor}{0 0 0}
\pscustom[linewidth=0.99999871,linecolor=curcolor]
{
\newpath
\moveto(18.27661984,112.62883094)
\lineto(125.50273134,221.82610968)
\lineto(225.86058709,114.56064952)
\lineto(125.50273134,6.05324039)
\closepath
}
}
{
\newrgbcolor{curcolor}{0 0 0}
\pscustom[linewidth=2.4999988,linecolor=curcolor]
{
\newpath
\moveto(0.01077543,112.56283661)
\lineto(235.25604661,114.59081771)
}
}
{
\newrgbcolor{curcolor}{0 0 0}
\pscustom[linestyle=none,fillstyle=solid,fillcolor=curcolor]
{
\newpath
\moveto(231.69583565,121.56038291)
\lineto(251.00545383,114.72658877)
\lineto(231.81652104,107.56090982)
\curveto(235.28123969,111.66842771)(235.2310044,117.49570838)(231.69583565,121.56038291)
\closepath
}
}
{
\newrgbcolor{curcolor}{0 0 0}
\pscustom[linewidth=2.4999988,linecolor=curcolor]
{
\newpath
\moveto(19.27655055,0.01015)
\lineto(17.57166236,209.70013047)
}
}
{
\newrgbcolor{curcolor}{0 0 0}
\pscustom[linestyle=none,fillstyle=solid,fillcolor=curcolor]
{
\newpath
\moveto(10.60035294,206.14333611)
\lineto(17.443611,225.44960234)
\lineto(24.5998835,206.25715955)
\curveto(20.49406437,209.72389111)(14.66675977,209.6765121)(10.60035294,206.14333611)
\closepath
}
}
{
\newrgbcolor{curcolor}{0 0 0}
\pscustom[linewidth=1.75748033,linecolor=curcolor]
{
\newpath
\moveto(225.48306142,128.94234149)
\lineto(225.66474331,98.23634464)
}
\rput[bl](28,210){$t$}
%\rput[bl](385,210){$t$}
\rput[bl](240,125){$x$}
%\rput[bl](500,125){$x$}
\rput[bl](219,80){$\lambda$}
\rput[bl](118,91){$\Lambda$}
}
\end{pspicture}
\caption{A causal diamond.}
\label{figpos1}
\end{figure}
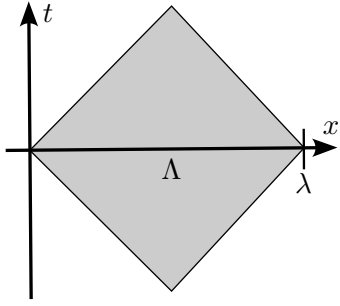%
The interval~$\Lambda := (0, \lambda)$ with~$\lambda>0$ is our spatial subregion.
Its boundary consists of the two corners~$(0,0)$ and~$(0, \lambda)$ of the causal diamond.
In this setting, an area law simply states that the entanglement entropy should be independent
of the size~$\lambda$ of the spatial subregion.
In order to make this statement precise, as the fermionic state in Minkowski space we choose the
vacuum state with an ultraviolet regularization on the scale~$\varepsilon>0$.
More precisely, on the Hilbert space~$\H_\scrM := L^2(\R, \C^2)$ of Dirac wave functions
at time zero,
we consider the bounded pseudo-differential operator~$\PiRI$ defined by
\[ %\label{def:Pireg}
(\PiRI \psi)(x) =  \frac{1}{2\pi} \int_{-\infty}^\infty \int_{-\infty}^\infty e^{ik (x-y)}
e^{-\varepsilon \omega(k)}
\bigg(1- \frac{1}{ \omega(k)} \begin{pmatrix} -k & m \\ m & k \end{pmatrix} \bigg)\:
\psi(y) \: dy\,dk \,,
\]
where~$\omega(k):=  \sqrt{k^2+m^2}$ with a mass parameter~$m \geq 0$ and~$\varepsilon>0$.
The parameter~$\varepsilon$ can be interpreted as a semi-classical parameter
that will tend to zero in our asymptotic results.
Now, for each~$\varkappa >0$ we introduce the \textit{R\'enyi entropy function}
as follows. If~$t\notin (0,1)$, we set~$\eta_\varkappa(t) = 0$. For~$t\in (0, 1)$ we define
\begin{align}
\begin{split}
	\label{eq:eta_gamma}
	\eta_\varkappa(t)= & \ 
			\displaystyle \frac{1}{1-\varkappa}\ln \big( t^\varkappa + (1-t)^\varkappa \big)
			\qquad\;\;\, \text{for } \varkappa\neq 1\:,\\[0.2cm]
\eta_1(t):=\lim\limits_{\varkappa \rightarrow 1} \eta_\varkappa(t)
	= &\ -t\ln t - (1-t)\ln (1-t) \qquad \text{for }  \varkappa = 1\;.
\end{split}
\end{align}
Note that~$\eta_1$ is the familiar von Neumann entropy function. 
The {\em{\Renyi entanglement entropy}} of the causal diamond is defined by
\[ %\label{eq:aimedTrace}
S_\varkappa(\PiRI, \Lambda; \eta_\varkappa):=\tr \Big( \eta_\varkappa \big( \chi_\Lambda \PiRI \chi_\Lambda \big)
- \chi_\Lambda \,\eta_\varkappa \big( \PiRI \big)\, \chi_\Lambda \Big) \:, \]
where~$\chi_\Lambda$ is the characteristic function of the interval~$\Lambda$.
As we shall see, since~$\Lambda$ is bounded and~$\varepsilon > 0$, both operators on the right-hand side are
trace class, so that the entropy~$S_\varkappa$ is well-defined. Our main objective is to analyze the asymptotic
behavior of the entropy~$S_\varkappa(\PiRI, \Lambda; \eta_\varkappa)$ in the limit~$\varepsilon \searrow 0$.
This is our main result:
\begin{Thm} {\bf{(Entanglement entropy of a causal diamond)}} \label{thmmain}
	\label{thm:MainRindlerSpacetime}
	For any~$\varkappa >0$, the \Renyi entanglement entropy of a causal diamond in two-dimensional Minkowski space satisfies the relation
	\begin{align}
		\label{eq:RenyiEntHalfspace}
		\lim\limits_{\varepsilon \searrow 0} \:\frac{1}{\ln(1/\varepsilon) } S_\varkappa(\PiRI, \Lambda; \eta_\varkappa) =  \frac{1}{ \pi^2} \int_0^1 \frac{\eta_\varkappa(t)}{t(1-t)} \:dt = \frac{1}{6}\frac{\varkappa+1}{\varkappa}\:.
	\end{align}
		In particular, for~$\varkappa =1$, it holds
		\[
		\lim\limits_{\varepsilon \searrow 0} \:\frac{1}{ \ln(1/\varepsilon)} S_\varkappa(\PiRI, \Lambda; \eta) = \frac{1}{3}\:.\]
\end{Thm}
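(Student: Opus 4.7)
The strategy is to convert the semiclassical limit $\varepsilon \searrow 0$ into a large-interval asymptotic for truncated Wiener-Hopf operators and then to apply a Widom-Sobolev type theorem for scalar piecewise-continuous symbols. First I would use the unitary rescaling $(U_\varepsilon\psi)(x):=\varepsilon^{1/2}\psi(\varepsilon x)$, under which $\chi_\Lambda\PiRI\chi_\Lambda$ is conjugate to $\chi_{[0,L]}\Op(\tilde\sigma_\varepsilon)\chi_{[0,L]}$ on $L^2([0,L];\C^2)$ with $L=\lambda/\varepsilon$ and
\[
\tilde\sigma_\varepsilon(k) \;=\; e^{-\sqrt{k^2+(\varepsilon m)^2}}\Bigl(1 - \tfrac{1}{\sqrt{k^2+(\varepsilon m)^2}}\begin{pmatrix}-k & \varepsilon m \\ \varepsilon m & k\end{pmatrix}\Bigr).
\]
As $\varepsilon\searrow 0$ this symbol converges pointwise on $\R\setminus\{0\}$ to the diagonal chiral symbol $\tilde\sigma_0(k)=e^{-|k|}\,\diag(1+\sign k,\,1-\sign k)$, so the problem becomes the large-$L$ asymptotic of
$\tr\bigl[\eta_\varkappa(T_L(\tilde\sigma_\varepsilon))-T_L(\eta_\varkappa\circ\tilde\sigma_\varepsilon)\bigr]$
with $T_L(a):=\chi_{[0,L]}\Op(a)\chi_{[0,L]}$.

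Since $\tilde\sigma_0$ is diagonal, the limiting operator decouples into two independent scalar truncated Fourier multipliers with symbols $a_\pm(k)=(1\pm\sign k)\,e^{-|k|}$, each possessing a single jump at $k=0$ (and giving equal contributions by $k\mapsto-k$ symmetry). To these scalar problems I would apply a Widom-Sobolev type formula: for a scalar symbol $a$ with a single jump from $a_-$ to $a_+$ and a suitable test function $g$,
\[
\tr\bigl[g(T_L(a)) - T_L(g\circ a)\bigr] \;=\; \frac{\ln L}{2\pi^2}\,\mathcal W(a_-,a_+;g) \,+\, o(\ln L)\,,
\]
with Widom integral $\mathcal W(\alpha,\beta;g):=\int_0^1[g((1-s)\alpha+s\beta)-(1-s)g(\alpha)-sg(\beta)]/(s(1-s))\,ds$. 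Taking $g=\eta_\varkappa$, using that $\eta_\varkappa$ is supported in $[0,1]$, and summing the two chiral contributions yields the prefactor $\tfrac{1}{\pi^2}\int_0^1 \eta_\varkappa(t)/(t(1-t))\,dt$. Since $\ln L=\ln(1/\varepsilon)+O(1)$, this gives the claimed limit. The closed form $\int_0^1 \eta_\varkappa(t)/(t(1-t))\,dt=\pi^2(\varkappa+1)/(6\varkappa)$ follows by a standard computation (for $\varkappa=1$ reducing to $\int_0^1\ln(1-t)/t\,dt=-\pi^2/6$ via the Euler sum).

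The main obstacle is controlling the mass and the UV smoothing. For $m>0$ the matrix $\tilde\sigma_\varepsilon(k)$ is smooth in $k$ and its eigenvectors rotate across $k=0$; as $\varepsilon\searrow 0$ this rotation concentrates on the scale $|k|\sim\varepsilon m$, producing a genuine jump only in the limit. The UV factor $e^{-\sqrt{k^2+(\varepsilon m)^2}}$ similarly smooths the step. One must show that such ``quasi-jumps'' yield the same $\ln L$ coefficient as the piecewise-constant massless limit---a standard but technically delicate perturbation estimate within the Widom-Sobolev framework, combined with trace-class control on the off-diagonal blocks produced by the eigenvector rotation near $k=0$, and a careful verification that the bulk subtraction $T_L(\eta_\varkappa\circ\tilde\sigma_\varepsilon)$ exactly cancels the $O(L)$ part of $\tr\,\eta_\varkappa(T_L(\tilde\sigma_\varepsilon))$ so only the logarithmic remainder survives.
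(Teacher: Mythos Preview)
Your plan is correct and follows essentially the same route as the paper: rescale so that $\varepsilon\searrow 0$ becomes a large-parameter problem, pass to the limiting diagonal (chiral) symbol, decouple into two scalar Wiener--Hopf problems, and read off the $\ln$-coefficient from a scalar Widom--Sobolev formula. The only cosmetic difference is that you dilate space to a growing interval $[0,L]$ while the paper keeps $\Lambda$ fixed and uses $\Op_\alpha$ with $\alpha=1/\varepsilon$; these are unitarily equivalent formulations.

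The substantive point is your last paragraph. What you flag as ``a standard but technically delicate perturbation estimate'' is precisely the part the paper actually proves, and it is the heart of the argument. The paper's device is a high/low frequency split of the symbol at the scale $|\xi|\sim|\log\alpha|$: on the high-frequency piece one has $\CA_\alpha^{>}-\CA_\infty\to 0$ uniformly, which together with a Schatten bound $\|\chi_\Lambda\Op_\alpha(\CA)(1-\chi_\Lambda)\|_{\sigma q}^{\sigma q}\lesssim\log\alpha$ feeds into an operator-comparison lemma (their Lemma~4.1/Proposition~4.3) to show the difference of entropic-difference operators is $o(\log\alpha)$ in trace norm; on the low-frequency piece a direct Schatten estimate gives $\|\chi_\Lambda\Op_\alpha(\CA_\alpha^{<})\chi_\Lambda\|_\sigma^\sigma=o(\log\alpha)$. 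Your sketch does not supply a mechanism of this kind, and the generic Widom--Sobolev statements you invoke are for a \emph{fixed} symbol with a jump, whereas here the symbol depends on the same small parameter that drives $L\to\infty$. Without an explicit two-scale control of the ``quasi-jump'' (e.g.\ the high/low split above, or an equivalent trace-norm continuity estimate uniform in $L$), the passage from $\tilde\sigma_\varepsilon$ to $\tilde\sigma_0$ is not justified.

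A minor point: your $\tilde\sigma_0(k)=e^{-|k|}\diag(1+\sign k,\,1-\sign k)$ takes values in $\{0,2\}$; the projector carries an overall factor $\tfrac12$ (cf.\ Section~2 of the paper), so the limiting scalar symbols should land in $[0,1]$, which is what makes the Widom integral $\int_0^1\eta_\varkappa(t)/(t(1-t))\,dt$ come out directly.
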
 \noindent
The fact that the entropy does not depend on the size~$\lambda$ of the diamond
and grows logarithmically as~$\varepsilon$ tends to zero
can be understood as a {\em{logarithmically enhanced area law}}.

Our method of proof is based on extensions of 
the methods in~\cite{widom1, helling-leschke-spitzer, leschke-sobolev-spitzer, LSS_2022}
to matrix-valued symbols as developed in~\cite{bhentropy, arealaw}.
Our presentation is self-contained and of expository style.

The paper is organized as follows. Section~\ref{secprelim} provides the physical preliminaries on the entanglement entropy of the Minkowski vacuum state restricted to a causal diamond.
In Section~\ref{sec:schatten},  we collect some mathematical background of our analysis including some abstract results on on Schatten norms and compact operators on Hilbert spaces.
Our main mathematical results are contained in Section~\ref{sec:truncated}, devoted to an asymp\-to\-tic analysis of truncated pseudo-differential operators.
The main result of this paper is then proved in Section~\ref{sec:diamond}.

\section{Physical Preliminaries} \label{secprelim}

\subsection{The Dirac Field in Two-Dimensional Minkowski Space} \label{secdir2d}
We consider two-dimensional Minkowski space~$\scrM:=(\R^2, g)$ endowed with the line element
\[ %\label{mink}
ds^2 = g_{ij}\: dx^i dx^j = dt^2 - dx^2 \:, \]
and denote by~$S \scrM = \scrM \times \C^2$ the trivial spinor bundle. As customary, we
equip the spinor bundle with the {\em spin inner product}, namely the indefinite inner product
\[ %\label{ssprod}
\Sl \psi | \phi \Sr = \la \psi, \begin{pmatrix} 0 & 1 \\ 1 & 0 \end{pmatrix} \phi \ra_{\C^2} \:, \]
where~$\la .,. \ra_{\C^2}$ is the canonical scalar product on~$\C^2$.
The {\em{Dirac operator}} is the first order differential operator acting on sections of the spinor bundle defined by
\beq \label{Dirop}
\Dir := i \gamma^j \partial_j \:,
\eeq
where the Dirac matrices~$\gamma^j$ are given, in the chiral representation, by
\[ %\label{gammarep}
\gamma^0 = \begin{pmatrix} 0 & 1 \\ 1 & 0 \end{pmatrix} \:,\qquad
\gamma^1 = \begin{pmatrix} 0 & 1 \\ -1 & 0 \end{pmatrix} \:. \]
%For sake of completeness, let us recall that the gamma matrices are symmetric with respect to the spin scalar product. 
Fixing a mass~$m \geq 0$, the {\em{Dirac equation}} reads
\beq \label{Direq}
(\Dir-m) \psi = 0 \:.
\eeq
The Dirac equation can be rewritten as a symmetric hyperbolic system, showing that its Cauchy problem is
well-posed
(for details see for example~\cite[Chapter~13]{intro}). Moreover, choosing
smooth and compactly supported initial data on a Cauchy surface~$\scrN$, a Dirac solution lies in the class~$\Cisc(\scrM, S\scrM)$
of smooth spinors with spatially compact support. On solutions~$\psi, \phi$
in this class, one introduces the (positive definite) scalar product
\[ %\label{print}
(\psi | \phi)_\scrM := \int_\scrN \Sl \psi | \slashed{\nu} \phi\Sr|_q\: d\mu_\scrN(q) \:, \]
where~$\slashed{\nu} = \gamma^j \nu_j$ denotes Clifford multiplication by the future-directed unit
normal~$\nu$, and~$d\mu_\scrN$ is the volume measure of the induced Riemannian metric
on~$\scrN$ (thus for the above ray~$\scrN = \{ (\alpha x, x) \text{ with } x > 0 \}$,
the measure~$d\mu_\scrN= \sqrt{1-\alpha^2}\, dx$ is a multiple of the Lebesgue measure).
Due to current conservation, this scalar product is independent of the choice of~$\scrN$.
Forming the completion, we obtain the Hilbert space~$(\H_\scrM, (.|.)_\scrM)$, referred to as the
{\em{solution space}} of the Dirac equation.
For convenience, we always
choose~$\scrN$ as the Cauchy surface~$\{t=0\}$, so that
\[ %\label{printM}
(\psi | \phi)_\scrM = \int_{-\infty}^\infty \Sl \psi | \gamma^0 \phi \Sr|_{(0,x)}\: dx \:. \]

\subsection{The Quantized Dirac Field and its Vacuum State}
The quantized Dirac field can be described in a two step-procedure. First,
one assigns to a classical physical system (described by the solution space) a unital $*$-algebra~$\mathfrak A$, whose elements are interpreted as observables of the system at hand. Then, one
determines the admissible physical states of the system by identifying a suitable subclass of
the linear, positive and normalized functionals~$\omega : \mathfrak A \to \C$. Once a state is specified, the Gelfand-Naimark-Segal (GNS) construction guarantees the existence of a representation of the quantum
field algebra as (in general, unbounded) operators defined in a common dense subspace of some Hilbert space.

We here restrict attention to a {\em{quasi-free}} Dirac state, which is fully characterized by its two-point
distribution. Then, as shown in~\cite[Section 4]{capoferri-murro}, 
the above procedure boils down to constructing suitable self-adjoint operators on the Dirac solution space.
In particular, the {\em{vacuum state}} can be represented by the projection operator
onto the space of negative frequencies solutions as follows. We rewrite the Dirac equation in the Hamiltonian form
\[ i \partial_t \psi = H \psi \qquad \text{with} \qquad
H = -i \gamma^0 \gamma^1 \partial_x + m \gamma^0 = \begin{pmatrix} i \partial_x & m \\ m & -i \partial_x \end{pmatrix} \:. \]
Taking the spatial Fourier transform
\[ \hat{\psi}(k) = \int_{-\infty}^\infty \psi(x)\: e^{-i k x} \:dx\:, \qquad
\psi(x) = \int_{-\infty}^\infty \frac{dk}{2 \pi}\: \hat{\psi}(k)\: e^{i k x} \:, \]
the Hamiltonian becomes the multiplication operator
\[ \hat{H}(k) = \begin{pmatrix} -k & m \\ m & k \end{pmatrix} \:, \]
which can be diagonalized by
\[ \hat{H}(k) = \omega E_+ - \omega E_- \:, \]
with~$\omega(k) := \sqrt{k^2+m^2}$ and
\[ E_\pm(k) =  \frac{1}{2} \pm \frac{1}{2 \omega(k)} \begin{pmatrix} -k & m \\ m & k \end{pmatrix} \:. \]
The projectors onto the space of negative frequencies solutions is 
\[ (\Pi \psi)(x) = \int_{-\infty}^\infty \Pi(x,y)\: \psi(y)\: dy \]
with integral kernel
\[ \Pi(x,y) =  \frac{1}{2} \int_{-\infty}^\infty \frac{dk}{2 \pi}\:
\bigg(\1_{\C^2} - \frac{1}{  \omega(k)} \begin{pmatrix} -k & m \\ m & k \end{pmatrix} \bigg)\:
e^{ik (x-y)} \:. \]

For future convenience, we define the {\em{regularized projection operator}} onto the negative
frequencies solutions as the integral operator
\[ (\PiRI \psi)(x) = \int_{-\infty}^\infty \PiRI(x,y)\: \psi(y)\: dy \:, \]
where the integral kernel is given by
\[ \PiRI(x,y) =  \frac{1}{2} \int_{-\infty}^\infty \frac{dk}{2 \pi}\:
e^{-\varepsilon \omega(k)}
\bigg(\1_{\C^2} - \frac{1}{  \omega(k)} \begin{pmatrix} -k & m \\ m & k \end{pmatrix} \bigg)\:
e^{ik (x-y)} \:. \]
This operator will play a pivotal role for computing the entanglement entropy.

\subsection{The Entanglement Entropy of a Causal Diamond}\label{secentquasi}
In this section we shall define the \Renyi entanglement entropy of a causal diamond~$\scrD$
embedded in two-dimensional Minkowski space~$\scrM$. For the sake of completeness, let us recall that a causal diamond is a two-dimensional spacetime isometric to the subset of two-dimensional Minkowski space
\[ %\label{scrRdef}
\scrD = \big\{ (t,x) \in \scrM \;\;\;\text{with}\;\;\; 0 < x < \lambda \text{ and } |t| < \min(x, \lambda-x) \big\}\:; \]
where~$\lambda>0$ is an arbitrary, but fixed parameter (see Figure~\ref{figpos1}).
Then the inclusions
\[ \scrD \subset \scrM \qquad \text{and} \qquad S\scrD = \scrD \times \C^2
\subset \scrM \times \C^2 = S\scrM \]
are clearly isometries, and the Dirac operator and the Dirac equation are again given by~\eqref{Dirop}
and~\eqref{Direq}. Adopting the notation of the previous section, we denote the subspace of solutions in~$\scrD$ by~$\H_\scrD$. As shown in\cite[Appendix A]{fermientropy}, the 
\Renyi entanglement entropy of~$\scrD$ can be expressed as
\[ S_\varkappa(\PiRI, \Lambda,\eta_\varkappa)=  \tr \eta_\varkappa\big( \pi_\scrD \PiRI \pi_\scrD \big)
- \tr \eta_\varkappa(\PiRI) \:, \]
where~$\pi_\scrD : \H \rightarrow \H_\scrD$ is the orthogonal projection operator.
This projection operator can be represented more
concretely as the multiplication operator by a characteristic function acting on the wave functions
on the Cauchy surface~$\{t=0\}$, i.e.\
\[ (\pi_\scrD \psi)(0,x) = \chi_\Lambda(x)\: \psi(0,x) \:, \]
where~$\Lambda = (0, \lambda)$ for the causal diamond (see again Figure~\ref{figpos1}).

\section{Schatten-von-Neumann Bounds for Pseudo-Differential Operators}
\label{sec:schatten}
In this section we state some basic
definitions and results on singular values and Schatten-von Neumann classes. 
For more details we refer to \cite[Chapter~11]{birman-solomjak}.

\subsection{Singular Values and Schatten-von Neumann Classes} 
\label{sec:SingValuesSchatten}
For a compact operator~$A$ in a separable Hilbert space~$\H$ we denote 
by~$s_k(A),\: k = 1, 2, \dots,$ its singular values (defined as the eigenvalues 
of the self-adjoint compact operator~$|A|$) labelled in non-increasing order counting 
multiplicities. For the sum~$A+B$ the following inequality holds,
\begin{align}
	\label{ineq:s_k A+B}
	s_{2k}(A+B)\le s_{2k-1}(A+B)\le s_k(A) + s_k(B) \:. 
\end{align}
We say that~$A$ belongs to the Schatten-von Neumann class~$\BS_p$, $p>0$, 
if the series
\[ \|A\|_{p} := \big(\tr|A|^p\big)^{1/p} = \bigg(\sum_{n=1}^\infty s_k(A)^p \bigg)^{1/p} \]
is finite. The functional 
$\|A\|_{p}$ defines a norm if~$p\ge 1$ 
and a quasi-norm if~$0<p<1$. With this (quasi-)norm, the class~$\BS_p$ is a complete space. 
Note that for~$p=1$ this coincides with the trace norm. Moreover, by~$\|.\|_{\infty}$ we denote the ordinary operator norm.
For~$0<p<1$ the quasi-norm is actually a \emph{$p$-norm}, that is,  
it satisfies the following ``triangle inequality" for all~$A, B\in\BS_p$:
\begin{align}\label{eq:ptriangl}
	\|A+B\|_{p}^p\le \|A\|_{p}^p + \|B\|_{p}^p\,.
\end{align} 
This inequality will be used frequently in what follows.
%Using~\eqref{eq:ptriangl}, we may also derive an estimate for block operators of the form
%$A=(a_{ij})_{1\leq i,j\leq n}$ with~$a_{ij} \in \BS_p$ for any~$1\leq i,j \leq n$: 
%\begin{align}
%\label{est:SingValuesBlockOp}
%\|A\|_{p}^p \leq \sum_{i,j=1}^n \big\|(\delta_{il} \delta_{jm}a_{ij})_{lm} \big\|_{p}^p
%= \sum_{i,j=1}^n \big\|a_{ij} \big\|_{p}^p  \:,
%\end{align}
%where in the last step we used that the singular values of the block operator~$(\delta_{il} \delta_{jm}a_{ij})_{lm}$ coincide with the ones of the operator~$a_{ij}$.
We also point out a useful estimate 
for individual eigenvalues for operators in~$\BS_p$:
\begin{align}\label{eq:ind}
	s_k(A)\le k^{-\frac{1}{p}}\, \|A\|_p,\ k = 1, 2, \dots.
\end{align}
In~\cite[p.~262]{birman-solomjak} %Cor. 11, Ch.11 Sec. 5 Subsec. 4;There's a typo in the source, but the proof makes it clearf
it is shown that the norms~$\|\cdot \|_{\BS_p}$ also fulfill a H\"older-like inequality, meaning that for any~$0<p<\infty$ and~$0<p_1,p_2\leq \infty$ such that~$p^{-1}=p_1^{-1}+p_2^{-1}$ and~$A \in \BS_{p_1}$, $B \in \BS_{p_2}$, the operator~$AB \in \BS_p$ with
\begin{align}
	\label{ineq:HoelderLike}
	\|AB \|_{p}  \leq  \|A \|_{p_1} \|B \|_{p_2} \:,
\end{align}
where by~$\|.\|_\infty$ we mean the ordinary operator norm.
Moreover, as explained in~\cite[p.~254]{birman-solomjak}, for any two~$0< p_1 <p_2 \leq \infty$, we
have~$\BS_{p_1}\subset \BS_{p_2}$ and for any~$A\in \BS_{p_1}$
\[ %\label{est:Monoq-Norm}
	\| A \|_{p_2} \leq \| A \|_{p_1}\:. \]

We refer to~\cite[Chapter~11]{birman-solomjak} for more details on singular values.

\subsection{Non-Smooth Spectral Functions}
We now state a result for compact operators on an arbitrary separable Hilbert space~$\H$. 
Let~$A$ be a symmetric bounded operator on~$\H$ and~$P$ an orthogonal projection operator on~$\H$. Given a
continuous function~$f \in C(\R)$ we define the operator 
\begin{align*}
	D(A, P; f) := P f(PAP) P - P f(A) P.
\end{align*}
In what follows it is convenient to require that the function~$f$ satisfies the following condition.

 \begin{Condition}\label{cond:f}
The function~$f\in C^2(\R\setminus\{ t_0 \})\cap C{}(\R)$ satisfies the bound 
\[ %\label{fnorm:eq}
	\bl f\bl_2 := \max_{0\le k\le 2}\sup_{t\not = t_0} |f^{(k)}(t)| |t-t_0|^{-\gamma+k}<\infty \]
for some~$\gamma\in (0, 1]$ and is supported on the interval~$(t_0-R, t_0+R)$ with some finite 
$R>0$.  
\end{Condition}

\begin{Example}
\label{ex:EtaCondf}
Consider the functions~$\eta_\varkappa$ defined in~\eqref{eq:eta_gamma} and set~$\T=\{0,1\}$. Then,  in the neighborhood of every~$t_i\in\T$, 
there exist positive constants~$\gamma$ and~$c_k>0$ with~$k=0,1,2$ such that
\[ |\eta_\varkappa^{(k)}(t)| \leq c_k \:|t-t_i|^{\gamma -k}\,. \]
As shown in~\cite[Lemma~D.1]{bhentropy}, the value of~$\gamma$ depends on~$\varkappa$ as follows,
\[ \begin{cases}
\gamma\leq \min\{1,\varkappa\} & \text{for }\varkappa\neq 1\\
\gamma <1 & \text{for }\varkappa = 1 \:.
\end{cases} \]
Notice that, using a partition of unity~$(\psi_k)_{0\leq k\leq 1}$ such that the support of each~$\psi_k$ only contains exactly one the elements in~$\T$, then each~$\eta_\varkappa \psi_k$ satisfies Condition~\ref{cond:f}.
\end{Example}

The next proposition follows from a more general result proven in 
\cite[Theorem~2.4]{sobolev-functions}; see also~\cite[Proposition 2.2]{leschke-sobolev-spitzer}.

\begin{Prp}\label{prop:szego} 
Suppose that~$f$ satisfies Condition~\ref{cond:f}  
for some~$\gamma\in (0, 1]$ and 
some~$t_0\in\R$ with~$R>0$. Let~$q \in (0, 1]$ and assume that~$\sigma< \min(2-q^{-1}, \gamma)$.  
Let~$A, B$ be two bounded self-adjoint operators and assume that~$|A - B|^\sigma \in \BS_p$. Then
\[ \|f(A) - f(B)\|_p \lesssim  \bl f\bl_n \,R^{\gamma-\sigma} \,\big\| |A - B|^\sigma \big\|_p \]
with a positive implicit constant independent of~$A$, $B$, $f$ and $R$. In particular, for an orthogonal projection~$P$ 	such that~$PA(I-P)\in \BS_{\sigma q}$ it holds
\begin{equation}\label{eq:szego}
	\|  D(A, P; f) \|_{q}
	\lesssim \bl f\bl_2\, R^{\gamma - \sigma} \, \|P A ( I-P)\|_{{\sigma q}}^\sigma \:,
\end{equation}
with a positive implicit constant 
independent of the operators~$A, P$, the function~$f$, and the parameter~$R$.  
\end{Prp}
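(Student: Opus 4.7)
The plan is to derive the estimate from the general Sobolev-type perturbation inequality asserted in the first part of the proposition, and then to specialize it by an appropriate choice of $B$ that realizes the block-diagonalization of $A$ with respect to $P$.

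For the first inequality
\[
\|f(A)-f(B)\|_p \;\lesssim\; \bl f \bl_2\, R^{\gamma-\sigma}\, \bigl\||A-B|^\sigma\bigr\|_p,
\]
I would invoke \cite[Theorem~2.4]{sobolev-functions}. Its proof proceeds by representing $f(A)-f(B)$ as a double operator integral against the spectral measures of $A$ and $B$, controlling the divided difference $(f(\lambda)-f(\mu))/(\lambda-\mu)$ via Condition~\ref{cond:f}, and then invoking a Birman--Solomyak-type Schatten-quasi-norm bound. The scaling factor $R^{\gamma-\sigma}$ comes from the interplay between the Hölder exponent $\gamma$ of $f$ and the exponent $\sigma$ to which the perturbation $A-B$ is raised, while the conditions $q\in (1/2,1]$ and $\sigma<2-q^{-1}$ enter precisely to keep the quasi-Banach bound valid.

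To deduce \eqref{eq:szego} from this, I would set $Q:=I-P$ and
\[
B \;:=\; PAP + QAQ,
\]
which commutes with $P$. Hence $f(B)$ commutes with $P$ as well, and the block decomposition $f(B)=Pf(PAP)P + Qf(QAQ)Q$ (viewing $B$ as block-diagonal in $\H=P\H\oplus Q\H$) yields $Pf(B)P = Pf(PAP)P$. Therefore
\[
D(A,P;f) \;=\; Pf(PAP)P - Pf(A)P \;=\; P\bigl(f(B)-f(A)\bigr)P,
\]
so the ideal property of Schatten classes and the first inequality applied with $p=q$ reduce the task to estimating $\bigl\||A-B|^\sigma\bigr\|_q$.

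Finally, in the decomposition $\H = P\H\oplus Q\H$ the self-adjoint operator $A-B = PAQ+QAP$ is purely off-diagonal, so $|A-B|^2$ is block-diagonal with blocks $(PAQ)(PAQ)^*$ and $(PAQ)^*(PAQ)$. Its singular values therefore come in equal pairs, $s_{2k-1}(A-B)=s_{2k}(A-B)=s_k(PAQ)$, which gives
\[
\bigl\||A-B|^\sigma\bigr\|_q^q \;=\; 2\sum_{k\ge 1} s_k(PAQ)^{\sigma q} \;=\; 2\,\|PAQ\|_{\sigma q}^{\sigma q},
\]
and hence $\bigl\||A-B|^\sigma\bigr\|_q \lesssim \|PA(I-P)\|_{\sigma q}^\sigma$ with an implicit constant depending only on $q$. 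Combining with the previous step proves \eqref{eq:szego}. The main obstacle is really the first, quasi-Banach Sobolev inequality, whose proof for $q\le 1$ rests on a delicate divided-difference analysis that does not follow by interpolation from the classical Hilbert--Schmidt case; this is why that step is cited rather than rederived, whereas the passage from the general bound to the commutator form \eqref{eq:szego} is a purely algebraic block-matrix manipulation.
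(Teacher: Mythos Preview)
Your proposal is correct and aligns with the paper's approach: the paper does not give a self-contained proof but simply records that the proposition follows from \cite[Theorem~2.4]{sobolev-functions} (see also \cite[Proposition~2.2]{leschke-sobolev-spitzer}), which is exactly the reference you invoke for the first inequality. Your additional block-diagonalization argument with $B=PAP+QAQ$ and the singular-value doubling computation for $A-B=PAQ+QAP$ is the standard way to deduce \eqref{eq:szego} from the general perturbation bound; the paper leaves this reduction implicit in the citations, but what you wrote is precisely how it goes.
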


\section{Spectral Analysis of Truncated Pseudo-Differential Operators}\label{sec:truncated}
Using the results from the previous section, we are now in the position to perform an asymptotic analysis of truncated pseudo-differential operators. To this end, consider a pseudo-differential operator defined as usual by
\[ %\label{def:opalpha}
\big(\Opa(\CA) \psi \big) (x)  = \frac{\alpha}{2\pi} \int_{-\infty}^\infty  \int_{-\infty}^\infty  \: e^{i\alpha \xi(x-y)} \CA(\xi) \psi(y)\ dy \,d\xi\:, \]
where~$\mathcal{A}$ is a $2\times 2$ matrix-valued symbol and~$\alpha$ is a strictly positive constant.  
A {\em truncated} pseudo-differential operator is then obtained multiplying~$\Opa(\CA)$ by the characteristic function~$\chi_\Lambda$ of~$\Lambda\subset \R$,
\[ \chi_\Lambda \Opa(\CA)  \chi_\Lambda \:. \]
We define the {\em entropic difference operator} by
\[ D_\alpha ( \CA, \Lambda; f ) := f \big( \chi_\Lambda \Opa(\CA)  \chi_\Lambda \big) - \chi_\Lambda f(\Opa(\CA) ) \chi_\Lambda \,. \]
Throughout this section we assume that the function~$f$ satisfies Condition~\ref{cond:f}. Inspired by~\cite[Lemma~5.6]{arealaw}, this is our first result.

\begin{Lemma}
	\label{lem:ErrorTermsAbstr}
Suppose that~$f$ satisfies Condition~\ref{cond:f} for some~$\gamma \in (0, 1]$. Let	$q \in (0, 1)$
and assume that   
	$\sigma< \min(2-q^{-1}, \gamma)$. Finally let be~$\CA^{(1)}_\alpha$ and~$\CA^{(2)}_\alpha$ two families of symbols  satisfying the conditions
\begin{align}
	\label{eq:symbolcon}
	& \sup_{\xi\in \R} \big| \CA^{(1)}_\alpha (\xi) - \CA^{(2)}_\alpha(\xi)\big| \rightarrow 0&&\hspace*{-1cm} \text{as~$\alpha \rightarrow \infty$} \\ 
		\label{cond:chi_Lambda Opa (1-chi_Lambda)}
	&	\Big\| \chi_\Lambda\, \Opa \big( \CA^{(j)}_\alpha \big)\, (1-\chi_\Lambda) \Big\|_{{\sigma q}}^{\sigma q}  \lesssim g(\alpha) &&\hspace*{-1cm}\text{for~$j=1,2$}\:, 
	\end{align}
for some~$q < \gamma$ and some positive function~$g$. Then
	\begin{align*}
		\lim\limits_{\alpha \rightarrow \infty } \frac{1}{g(\alpha)}\, \big\|D_\alpha(\CA^{(1)}_\alpha, \Lambda;f) - D_\alpha(\CA^{(2)}_\alpha, \Lambda;f) \big\|_{1} =0\:.
	\end{align*}
\end{Lemma}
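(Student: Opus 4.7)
My plan is to rewrite each $D_\alpha(\CA^{(j)}_\alpha,\Lambda;f)$ in a form that exposes the block structure with respect to the projection $P:=\chi_\Lambda$, and then combine Proposition~\ref{prop:szego} with a Schatten-class interpolation exploiting both~\eqref{eq:symbolcon} and the bound on $\|\chi_\Lambda \Opa(\CA^{(j)}_\alpha)(1-\chi_\Lambda)\|_{\sigma q}$. Setting $T_j:=\Opa(\CA^{(j)}_\alpha)$, $Q:=1-P$, and $C_j:=PT_jP+QT_jQ$, the block-diagonality of $C_j$ yields $P f(C_j) P = f(PT_jP)$, hence
\[
D_\alpha(\CA^{(j)}_\alpha,\Lambda;f) = P\bigl[f(C_j)-f(T_j)\bigr]P,
\]
and so $\Delta_\alpha := D_\alpha(\CA^{(1)}) - D_\alpha(\CA^{(2)}) = P R_f P$ with $R_f := [f(C_1)-f(T_1)] - [f(C_2)-f(T_2)]$.

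For each individual $j$, Proposition~\ref{prop:szego} (general form, with $p=1$ and Schatten exponent $\sigma q$, permitted since $\sigma q<\gamma\leq 1$) applied to $A=T_j$, $B=C_j$ yields $\|f(T_j)-f(C_j)\|_1 \lesssim \|T_j-C_j\|_{\sigma q}^{\sigma q}$. Since $T_j-C_j = PT_jQ + QT_jP$, Lemma~\ref{lem:CommEst}(i) and the $p$-triangle inequality~\eqref{eq:ptriangl} give $\|T_j-C_j\|_{\sigma q}^{\sigma q}\lesssim\|PT_jQ\|_{\sigma q}^{\sigma q}\lesssim g(\alpha)$, producing the individual bound $\|D_\alpha(\CA^{(j)})\|_1 \lesssim g(\alpha)$ but not yet the $o(g(\alpha))$ required for the difference. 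To extract the extra vanishing factor I would focus on
\[
E_\alpha := (T_1-C_1) - (T_2-C_2) = P(T_1-T_2)Q + Q(T_1-T_2)P.
\]
By~\eqref{eq:symbolcon}, its operator norm satisfies $\|E_\alpha\|_\infty \leq 2\sup_\xi|\CA^{(1)}_\alpha-\CA^{(2)}_\alpha|=:2\eps_\alpha\to 0$, while the $p$-triangle applied to the Schatten bound yields $\|E_\alpha\|_{\sigma q}^{\sigma q}\lesssim g(\alpha)$. The elementary interpolation $\|A\|_r^r \leq \|A\|_\infty^{r-\sigma q}\|A\|_{\sigma q}^{\sigma q}$, valid for any $r\geq \sigma q$ (obtained by writing $s_k(A)^r = s_k(A)^{r-\sigma q}s_k(A)^{\sigma q}$ and using $s_k(A)\leq\|A\|_\infty$), then gives
\[
\|E_\alpha\|_{\sigma'}^{\sigma'} \lesssim \eps_\alpha^{\sigma'-\sigma q}\,g(\alpha) = o(g(\alpha)) \qquad\text{for any }\sigma'\in(\sigma q,\min(1,\gamma)),
\]
which is the source of the desired vanishing factor.

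The main obstacle is to upgrade Proposition~\ref{prop:szego}, which controls only single differences $f(A)-f(B)$, into a bound on the second-order difference $R_f$ in terms of $\|E_\alpha\|_{\sigma'}^{\sigma'}$. I expect to handle this either by establishing a second-difference variant of Proposition~\ref{prop:szego},
\[
\|f(A_1)-f(B_1)-f(A_2)+f(B_2)\|_1 \lesssim \|(A_1-B_1)-(A_2-B_2)\|_{\sigma'}^{\sigma'},
\]
by revisiting the integral representations underlying Theorem~2.4 of~\cite{sobolev-functions} in the spirit of Lemma~5.6 of~\cite{arealaw}; or alternatively by telescoping along the one-parameter families $T_s:=(1-s)T_1+sT_2$ and $C_s:=(1-s)C_1+sC_2$, $s\in[0,1]$, and integrating a Daletski{\u\i}--Krein type Fr{\'e}chet derivative whose discrepancy is controlled by $E_\alpha$. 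Either route delivers $\|\Delta_\alpha\|_1 \lesssim \eps_\alpha^{\sigma'-\sigma q}\,g(\alpha)$, which upon division by $g(\alpha)$ tends to zero as $\alpha\to\infty$.
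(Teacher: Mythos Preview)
Your reduction to the block-diagonal operators $C_j$ and the interpolation yielding $\|E_\alpha\|_{\sigma'}^{\sigma'}=o(g(\alpha))$ are both correct and useful observations, but the argument breaks down precisely where you flag the ``main obstacle''. The second-difference estimate you propose,
\[
\|f(A_1)-f(B_1)-f(A_2)+f(B_2)\|_1 \lesssim \|(A_1-B_1)-(A_2-B_2)\|_{\sigma'}^{\sigma'},
\]
is false in general: take $f(t)=t^2$, $A_1=\1$, $B_1=0$, $A_2=2\,\1$, $B_2=\1$. Then $(A_1-B_1)-(A_2-B_2)=0$ while $f(A_1)-f(B_1)-f(A_2)+f(B_2)=-2\,\1$. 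Any correct second-difference bound must also involve, at least, $\|A_1-A_2\|$ or $\|B_1-B_2\|$, and establishing such an estimate for the non-smooth $f$ of Condition~\ref{cond:f} via double operator integrals would be a substantial undertaking in its own right. The Daletski\u{\i}--Krein route you sketch runs into the same difficulty: $\frac{d}{ds}[f(C_s)-f(T_s)]$ decomposes as $Df|_{C_s}(C_2-C_1)-Df|_{T_s}(T_2-T_1)$, and the increment $T_2-T_1$ is only small in operator norm, not in any Schatten class, so there is no obvious way to extract a trace-norm bound with the $o(g(\alpha))$ factor.

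The paper sidesteps this entirely with a much more elementary singular-value splitting argument (essentially a real-interpolation trick between $\BS_q$ and $\BS_\infty$). One writes $\|D_\alpha(\CA^{(1)})-D_\alpha(\CA^{(2)})\|_1=\sum_k s_k$ and cuts the sum at $2N$ with $N\sim g(\alpha)\delta^{q/(q-1)}$. The first $2N$ singular values are bounded by the operator norm of the difference, which tends to zero because \eqref{eq:symbolcon} and the continuity of the functional calculus force $\|f(\chi_\Lambda\Opa(\CA^{(1)}_\alpha)\chi_\Lambda)-f(\chi_\Lambda\Opa(\CA^{(2)}_\alpha)\chi_\Lambda)\|_\infty\to 0$ (and similarly for the second piece of $D_\alpha$); this makes the head $\lesssim g(\alpha)\delta$. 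For the tail one uses \eqref{ineq:s_k A+B}, \eqref{eq:ind} and the individual bound $\|D_\alpha(\CA^{(j)})\|_q^q\lesssim g(\alpha)$ from Proposition~\ref{prop:szego} to get $\sum_{k>2N}s_k\lesssim g(\alpha)^{1/q}N^{1-1/q}\lesssim g(\alpha)\delta$. Since $\delta$ is arbitrary, the limit vanishes. Note that no second-difference estimate for $f$ is needed; the two $D_\alpha(\CA^{(j)})$ are only ever estimated separately in $\BS_q$, and the smallness comes from the operator-norm piece.
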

\begin{proof}

For ease of notation, throughout the proof we denote
\[ D_\alpha(\CA^{(j)}_\alpha)  \equiv D_\alpha(\CA^{(j)}_\alpha, \Lambda;f)\:, \qquad j=1,2\:.  \]
Take~$0<\delta <1$ arbitrary.
For any~$\alpha>1$ we define
\[N\equiv N(\alpha ):= \big\lceil g(\alpha)\,\delta^{\frac{q}{q-1}} \big\rceil \:. \]
Now rewrite
\[ \big\| D_\alpha(\CA^{(1)}_\alpha) - D_\alpha(\CA^{(2)}_\alpha) \big\|_{1}
= \sum_{k=1}^{\infty} s_k \big( D_\alpha(\CA^{(1)}_\alpha) - D_\alpha(\CA^{(2)}_\alpha) \big)
= Z_1(N) + Z_2(N)\:, \]
where~$Z_1$ and~$Z_2$ involve the small respectively large singular values,
\begin{align*}
Z_1(N)&:=\sum_{k=1}^{2N} s_k\big( D_\alpha(\CA^{(1)}_\alpha) - D_\alpha(\CA^{(2)}_\alpha)  \big)  \\
Z_2(N)&:= \sum_{k=2N+1}^{\infty} s_k\big( D_\alpha(\CA^{(1)}_\alpha) - D_\alpha(\CA^{(2)}_\alpha)  \big)\:.
\end{align*}

We now estimate~$Z_1$ and~$Z_2$ separately. For the estimate of~$Z_1$, we use that
\[ s_k\big( D_\alpha(\CA^{(1)}_\alpha) - D_\alpha(\CA^{(2)}_\alpha)  \big) \leq \| D_\alpha(\CA^{(1)}_\alpha) - D_\alpha(\CA^{(2)}_\alpha) \|_\infty \:.\]
As shown in
\cite[Lemma 5.5]{arealaw}, since~$\CA^{(1)}_\alpha$ and~$\CA^{(2)}_\alpha$ are two families of uniformly 
bounded self-adjoint operators satisfying~$\|\CA^{(1)}_\alpha-\CA^{(2)}_\alpha\|_\infty\to 0$ 
as~$\alpha \to \infty$, then, for any function~$g\in C(\R)$ we have 
\begin{align*}
	\|g(\CA^{(1)}_\alpha) - g(\CA^{(2)}_\alpha)\|_\infty\to 0 \qquad \textup{as~$\alpha\to \infty$}\:.
\end{align*}	
Therefore, combining this observation with Example~\ref{ex:EtaCondf}, it is clear that there exists~$\tilde{\alpha}(\delta)\equiv \tilde{\alpha}>0$ such that for any~$\alpha > \tilde{\alpha}$,
\begin{align*}
\big\| \eta_\varkappa \big( \chi_\Lambda \Opa(\CA^{(1)}_\alpha) \chi_\Lambda \big) - \eta_\varkappa \big( \chi_\Lambda \Opa(\CA^{(2)}_\alpha) \chi_\Lambda \big) \big\|_\infty &\leq \delta^{\frac{1}{1-q}}\qquad \text{and}\\
\big\| \chi_\Lambda \big( \eta_\varkappa\big( \Opa(\CA^{(1)}_\alpha) \big) -  \eta_\varkappa\big( \Opa(\CA^{(2)}_\alpha) \big) \big) \chi_\Lambda \big\|_\infty &\leq \delta^{\frac{1}{1-q}}\:.
\end{align*}
Thus for any~$\alpha > \tilde{\alpha}$ we obtain
\[ Z_1(N) \leq 2N \| D_\alpha(\CA^{(1)}_\alpha) - D_\alpha(\CA^{(2)}_\alpha) \|_\infty \leq 4 N \delta^{\frac{1}{1-q}} \leq 4 g(\alpha)(\delta+\delta^{\frac{1}{1-q}})\leq 8 g(\alpha)\, \delta\:. \]

In order to estimate~$Z_2$, we use~\eqref{ineq:s_k A+B} and~\eqref{eq:ind} to obtain
\begin{align*}
Z_2(N) &\leq 2\sum_{k=N}^\infty s_k\big(D_\alpha(\CA^{(1)}_\alpha)\big) +  2\sum_{k=N}^\infty s_k\big(D_\alpha(\CA^{(2)}_\alpha)\big) \\
&\leq 2 \Big( \big\| D_\alpha(\CA^{(1)}_\alpha) \big\|_{q} + \big\| D_\alpha(\CA^{(2)}_\alpha ) \big\|_{q} \Big) \sum_{k=N}^\infty k^{-1/q} \:.
\end{align*}
We now notice that, applying Proposition~\ref{prop:szego} to~$A= \Opa(\CA)~$ and~$P = \chi_\Lambda$
we  obtain
\[ %\label{est:szegoApplied}
	\|  D_\alpha(\CA, \chi_\Lambda; f) \|_{q}
	\lesssim \bl f\bl_2\, R^{\gamma - \sigma} \, \|\chi_\Lambda \Opa(\CA) (\1- \chi_\Lambda)\|_{\sigma q}^\sigma \]
with a positive implicit constant 
independent of~$\CA, \Lambda$, the function~$f$, and the parameter~$R$.  Combining the latter estimate with  condition~\eqref{cond:chi_Lambda Opa (1-chi_Lambda)}, it follows that
\[ \|D_\alpha(\CA^{(j)}_\alpha)\|_{q}^q \lesssim g(\alpha) \qquad \text{j=1,2} \:. \]
Consequently, $Z_2(N)$ may be estimated by
\begin{align*}
	Z_2(N) 	&\lesssim g(\alpha)^{1/q} \sum_{k=N}^\infty k^{-1/q} \leq 
	g(\alpha)^{1/q}  \int_{N-1}^{\infty}  k^{-1/q} dk \leq g(\alpha)^{1/q}  \int_{g(\alpha)	\delta^{\frac{q}{q-1}}}^{\infty}  k^{-1/q} dk \\
	&\lesssim
	g(\alpha)^{1/q} \,\Big(g(\alpha) \,\delta^{\frac{q}{q-1}}\Big)^{1-1/q} =  g(\alpha)\, \delta\:.
\end{align*}

In summary, for any~$\alpha > \tilde{\alpha}(\delta)$ we obtain the estimate
\[  
\|D_\alpha(\CA^{(1)}_\alpha) - D_\alpha(\CA^{(2)}_\alpha) \|_{1} \lesssim g(\alpha)\, \delta \:,
\]
which leads to
\[
	\lim\limits_{\alpha \rightarrow \infty } \frac{1}{g(\alpha)}\, \big\|D_\alpha(\CA^{(1)}_\alpha, \Lambda;f) - D_\alpha(\CA^{(2)}_\alpha, \Lambda;f) \big\|_{1}  \leq \delta \:.
\]
Since~$\delta \in(0,1)$ is arbitrary, this completes the proof.
\end{proof}

We also recall the following result stated in~\cite[Proposition~2.14]{bhentropy}, based on
earlier results~\cite[Theorem 11.1]{birman-solomjak2}, \cite[Section 5.8]{birman-karadzov} and \cite[Theorem 4.5]{simon2005}. Given~$\sigma \in (0, 2)$ and~$g \in L^2_\text{loc}(\R)$ we set
\beq \label{sigmanorm}
|g|_\sigma := \bigg[ \sum_{z \in \Z} \bigg( \int_z^{z+1} |g(x)|^2 \bigg)^{\sigma/2} \:\bigg]^{1/\sigma}\:.
\eeq
This defines a norm if~$\sigma \geq 1$.

\begin{Prp}	\label{prp72}
Given a matrix-valued symbol~$\CA \in L^2_{\loc}(\R)$ and a function~$h\in L^2_{\loc}(\R)$ with~$\| \CA\|_\sigma, |h|_\sigma<\infty$ for some~$\sigma \in (0,2)$,
it follows that~$h \,\mathrm{Op}_1(\CA) \in \SN_\sigma$ and
	\[ \big\| h\: \mathrm{Op}_1 (\CA) \big\|_\sigma \leq C \:|h|_\sigma \:|\CA|_\sigma \:.  \]
\end{Prp}

We conclude this section by considering symbols~$\CA$ which satisfy the following condition.
\begin{Condition}
\label{cond:MultiScale}
Consider a smooth matrix-valued symbol~$\CA(\xi)$  
for which there exist positive continuous functions  
$v$ and~$\tau$ such that the following estimate holds,
\begin{equation}\label{scales:eq}
	\|\nabla_{\bxi}^n \CA(\bxi) \|\lesssim
	\tau(\bxi)^{-n} \,v(\bxi)\,,\qquad \text{for all~$\bxi \in \R$ and~$n = 0, 1, \dots\:.$}\:.
\end{equation}
We call~$\tau$ the {\bf{scale}} (function)
and~$v$ the {\bf{amplitude}} (function).
The scale~$\tau$ is assumed to be globally Lipschitz 
with Lipschitz constant~$\nu <1$, that is, 
\begin{equation}\label{Lip:eq}
	|\tau(\bxi) - \tau(\boldeta)| \le \nu\: |\bxi-\boldeta|\qquad \text{for all~$\bxi,\boldeta\in\R$}\:.
\end{equation}
Moreover, there shall also exist constants~$c,C>0$ such that the amplitude~$v$ satisfies the bounds
\begin{equation}\label{w:eq}
	c < \frac{v(\bxi)}{v(\boldeta)}< C \quad \text{ for all }\boldeta\in B\bigl(\bxi, \tau(\bxi)\bigr)\,,
\end{equation}
with~$c$ and~$C$ independent of~$\bxi$ and~$\boldeta$.
It is useful to think of~$v$ and~$\tau$ as 
(functional) parameters. They, in turn, may depend on other 
parameters (e.g.\ numerical parameters like~$\alpha$).
\end{Condition}
Given functions~$\nu$ and~$\tau$ and numbers~$\sigma>0$, $\lambda \in \R$ we denote
\[
V_{\sigma,\lambda}(v,\tau):= \int_{-\infty}^\infty \frac{v(\xi)^\sigma}{\tau(\xi)^\lambda}\: d\xi\:.
\]
The next result is a special case of~\cite[Lemma 3.4]{leschke-sobolev-spitzer}
(the generalization to matrix-valued symbols follows from the triangle inequality~\eqref{eq:ptriangl}).

\begin{Prp} \label{prop:cross_smooth} %Close to the text in~\cite[Lemma 3.4]{leschke-sobolev-spitzer}
Let~$\Lambda\subset \R$ be a bounded interval and let 
the functions~$\tau$ and~$v$ be as described above. 
Suppose that the symbol~$\CA$ satisfies the bounds~\eqref{scales:eq}, and that the conditions 	
\[ %\label{tau_low:eq}
\tau_{\textup{\tiny inf}} := \inf_{\bxi\in\R}\tau(\bxi)>0 \qquad \text{and} \qquad
\alpha\,\tau_{\textup{\tiny inf}}\gtrsim 1 \]
hold. Then for any~$r\in (0, 1]$ we have 
\[ %\label{eq:a_multi}
	\big\| \chi_\Lambda \,\Opa(\CA)\, (1-\chi_\Lambda) \big\|_{r}^r
	\lesssim V_{r,1}(v,\tau)\:. \]
This bound is uniform in the symbols~$\CA$ satisfying Condition~\ref{cond:MultiScale} with the same 
implicit constants as in~\eqref{scales:eq} (namely, the constants~$\nu$ in~\eqref{Lip:eq} and~$c,C$ in~\eqref{w:eq},
but not necessarily the same functions~$\tau$ and~$v$). 
\end{Prp}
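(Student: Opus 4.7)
The plan is to localize the symbol on the natural length scale $\tau$ and sum the resulting commutator norms via the $\sigma$-quasi-triangle inequality. Using the Lipschitz assumption \eqref{Lip:eq} with $\nu<1$, one constructs a Whitney-type partition of unity $\{\phi_k\}_{k\in\N}$ on $\R$ subordinate to $\tau$, i.e.\ with $\supp\phi_k\subset B(\xi_k,c\tau(\xi_k))$, uniformly bounded overlap multiplicity, and $|\phi_k^{(n)}|\lesssim\tau(\xi_k)^{-n}$. Setting $a_k:=a\phi_k$ and writing $\tau_k:=\tau(\xi_k)$, $v_k:=v(\xi_k)$, conditions \eqref{scales:eq} and \eqref{w:eq} give $|a_k^{(n)}(\xi)|\lesssim v_k\,\tau_k^{-n}$. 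Since $\sigma\le 1$, the quasi-triangle inequality \eqref{eq:ptriangl} then yields
\[ \big\|[\Opa(a),\chi_I]\big\|_{\sigma}^\sigma \;\le\; \sum_k \big\|[\Opa(a_k),\chi_I]\big\|_{\sigma}^\sigma\,. \]

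Next, I would rescale each local piece to a unit-scale model problem. After the substitution $\xi=\xi_k+\tau_k\eta$, the rescaled symbol $b_k(\eta):=v_k^{-1}a_k(\xi_k+\tau_k\eta)$ is supported in the unit ball with $|b_k^{(n)}|\lesssim 1$ uniformly in $k$. Absorbing the frequency translation into the unitary modulation $U_k\psi(x):=e^{i\alpha\xi_k x}\psi(x)$, which commutes with $\chi_I$, a direct calculation gives
\[ \Opa(a_k) \;=\; v_k\, U_k\, \Op_{\alpha\tau_k}(b_k)\, U_k^*\,, \]
so that $\|[\Opa(a_k),\chi_I]\|_{\sigma} = v_k\,\|[\Op_{\alpha\tau_k}(b_k),\chi_I]\|_{\sigma}$.

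The central step, which I expect to be the main obstacle, is a universal estimate
\[ \big\|[\Op_\beta(b),\chi_I]\big\|_{\sigma}^\sigma \;\lesssim\; 1 \]
uniformly in $\beta\ge\beta_0>0$ for every smooth symbol $b$ supported in the unit ball with uniformly bounded derivatives. The hypothesis $\alpha\tauinf\gtrsim 1$ is precisely what keeps $\beta=\alpha\tau_k$ bounded away from zero. The commutator $[\Op_\beta(b),\chi_I]$ is a Hankel-type operator whose kernel is concentrated near the two endpoints of $I$; the uniform Schatten bound is obtained by splitting it into a near-diagonal part controlled by the smoothness of $b$ and an oscillatory off-diagonal part handled via repeated integration by parts in the oscillatory kernel, following the strategy of \cite[Lemma~3.4]{leschke-sobolev-spitzer}.

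Finally, the bounded overlap multiplicity of the partition together with the comparability $v(\xi)\simeq v_k$ on $\supp\phi_k$ (from \eqref{w:eq}) and $|\supp\phi_k|\simeq\tau_k$ give
\[ \sum_k v_k^\sigma \;\lesssim\; \sum_k \int_{\supp\phi_k}\frac{v(\xi)^\sigma}{\tau(\xi)}\,d\xi \;\lesssim\; V_{\sigma,1}(v,\tau)\,, \]
which combines with the previous two displays to yield the claimed bound. Uniformity of the implicit constants in the symbol class is manifest at each step, since the partition and the rescaled symbols $b_k$ depend on $a$ only through the constants appearing in \eqref{scales:eq}, \eqref{Lip:eq}, and \eqref{w:eq}.
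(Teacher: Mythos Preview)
The paper does not supply its own proof of this proposition; it simply records it as a special case of \cite[Lemma~3.4]{leschke-sobolev-spitzer} and uses it as a black box. Your sketch reproduces the standard localization--rescaling argument behind that lemma (Whitney cover adapted to~$\tau$, reduction to a unit-scale commutator estimate via the modulation~$U_k$, and summation by the $\sigma$-triangle inequality), and you even cite the same reference for the core step. So your approach is correct and coincides with the method the paper defers to; there is nothing further to compare.
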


\section{An Area Law for the Causal Diamond} \label{sec:diamond}
This section is the core of the paper. Our goal is to study the asymptotic behavior of the operator
\[ S_\varkappa(\PiRI, \Lambda; \eta_\varkappa):=\tr \Big( \eta_\varkappa \big( \chi_\Lambda \PiRI \chi_\Lambda \big) - \chi_\Lambda \eta_\varkappa(\PiRI) \chi_\Lambda \Big) \qquad \text{ as~$\varepsilon \searrow 0$}\:. \]
By construction, the regularized one-particle density operator is the pseudo-differential operator~$\PiRI=\Op_1(\CA^{(\varepsilon)})$ with symbol given by 
\[ \CA^{(\varepsilon)} := e^{-\varepsilon \omega(k)}
\bigg(1- \frac{1}{ \omega(k)} \begin{pmatrix} -k & m \\ m & k \end{pmatrix} \bigg)\,. \]
In order to connect our analysis with the results obtained in the previous sections, we introduce
the parameter~$\alpha:=\varepsilon^{-1}$ and introduce the rescaled momentum variable~$\xi$ by
\beq \label{xiscale}
\xi := \varepsilon k = \frac{k}{\alpha} \:,
\eeq
obtaining
\beq \label{PiOpa}
\PiRI =\Opa(\CA_\alpha)
\eeq
with symbols
\beq \label{Aadef}
\CA_\alpha(\xi) := \frac{1}{2}\:e^{-\sqrt{\xi^2+(m/\alpha)^2}}\:
\bigg[\1_{\C^2} - \frac{1}{  \sqrt{\xi^2 + (m/\alpha)^2}} \begin{pmatrix} -\xi & m/\alpha \\ m/\alpha & \xi \end{pmatrix} \bigg]\,.
\eeq
We are interested in the asymptotics for large~$\alpha$.
In particular, the \Renyi entanglement entropy~$S_\varkappa$ can be rewritten as the trace of the
entropic difference operator defined by
\[ D_\alpha(\eta_\varkappa,\Lambda,\mathcal A) :=  \eta_\varkappa\big( \chi_\Lambda \Opa(\mathcal A) \chi_\Lambda \big) - \chi_\Lambda \eta_\varkappa\big(\Opa(\mathcal A)\big) \chi_\Lambda \:. \]
Since for~$\alpha \rightarrow \infty$ the symbol converges to
\beq \label{Ainf}
A_{\infty}(\xi) := \lim_{\alpha\to \infty} A_{\alpha}(\xi)
= e^{- \vert \xi \vert } \begin{pmatrix}  \chi_{\mathbb{R}^+}(\xi) & 0\\ 0 &  \chi_{\mathbb{R}^-}(\xi) \end{pmatrix} \:,
\eeq
we split the trace of the entropic difference into the two contributions
\begin{align}
 \tr  D_\alpha\big(\eta_\varkappa, \Lambda, \mathcal A_\alpha \big) &= \tr  D_\alpha\big(\eta_\varkappa, \Lambda, \CA_\infty \big) +  \Big(  \tr D_\alpha(\eta_\varkappa, \Lambda, \CA_\alpha ) -  \tr D_\alpha(\eta_\varkappa, \Lambda, \CA_\infty) \Big) \notag \\
 &= \tr  D_\alpha\big(\eta_\varkappa, \Lambda, \CA_\infty \big) +  \tr\Big(   D_\alpha(\eta_\varkappa, \Lambda, \CA_\alpha ) -   D_\alpha(\eta_\varkappa, \Lambda, \CA_\infty) \Big)\:. \label{contribs}
\end{align}

In the following sections, we shall analyze these two contributions separately.
The first term will give rise to the enhanced area law, whereas the second term will tend to zero.
Before entering the details, we remark that in the massless case~$m=0$, the second term vanishes.
Therefore, the proof in the massless case will be completed already at the end of Section~\ref{secdiag}

\subsection{The Diagonal Terms} \label{secdiag}

In this section, we shall estimate the trace of the entropic difference~$ D_\alpha(\eta_\varkappa, \Lambda, \CA_\infty)$. Since the symbol~$\CA_\infty$ is diagonal, our analysis reduces to that in~\cite{bhentropy}.
\begin{Lemma} \label{lemmadiag}
\[ \lim_{\alpha \rightarrow \infty} \frac{1}{\log \alpha}
\tr  D_\alpha(\eta_\varkappa, \Lambda, \CA_\infty)   = \frac{1}{ \pi^2} \int_0^1 \frac{\eta_\varkappa(t)}{t(1-t)} \:dt = \frac{1}{6}\frac{\varkappa+1}{\varkappa}\:. \]
\end{Lemma}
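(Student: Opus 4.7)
\emph{Proof proposal.}
The key feature is that the limiting symbol $\CA_\infty(\xi)=e^{-|\xi|}\diag(\chi_{\R^+}(\xi),\chi_{\R^-}(\xi))$ is diagonal, so my plan is to reduce to two scalar one-jump problems and then invoke the logarithmically enhanced area law from~\cite{arealaw}. Writing $a_\pm(\xi):=e^{-|\xi|}\chi_{\R^\pm}(\xi)$, the operator $\Op_\alpha(\CA_\infty)$ decomposes as a direct sum $\Op_\alpha(a_+)\oplus\Op_\alpha(a_-)$ with respect to the spinorial splitting $L^2(\R;\C^2)=L^2(\R)\oplus L^2(\R)$. Since $\chi_\Lambda$ is scalar multiplication and therefore preserves this decomposition, and since continuous functional calculus respects direct sums, I obtain
\[
\tr D_\alpha(\eta_\varkappa,\Lambda,\CA_\infty)= \tr D_\alpha(\eta_\varkappa,\Lambda,a_+) + \tr D_\alpha(\eta_\varkappa,\Lambda,a_-).
\]

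Next I would apply the one-dimensional Widom--Sobolev asymptotic for scalar truncated pseudo-differential operators with a single jump, as established in~\cite{arealaw} (building on~\cite{leschke-sobolev-spitzer, LSS_2022}). Each $a_\pm$ is smooth on $\R\setminus\{0\}$, exponentially decaying at infinity, and has a single jump at $\xi=0$ with limits $\{0,1\}$. With the Widom functional
\[
U(\mu,\nu;f):=\int_0^1 \bigl(f((1-s)\mu+s\nu)-(1-s)f(\mu)-sf(\nu)\bigr)\,\frac{ds}{s(1-s)},
\]
the asymptotic gives
\[
\lim_{\alpha\to\infty}\frac{1}{\log\alpha}\,\tr D_\alpha(\eta_\varkappa,\Lambda,a_\pm)=\frac{|\partial\Lambda|}{4\pi^2}\,U\bigl(a_\pm(0^-),a_\pm(0^+);\eta_\varkappa\bigr).
\]
Since $\eta_\varkappa(0)=\eta_\varkappa(1)=0$, both $U(0,1;\eta_\varkappa)$ and $U(1,0;\eta_\varkappa)$ reduce to $\int_0^1 \eta_\varkappa(s)/(s(1-s))\,ds$. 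With $|\partial\Lambda|=2$ and summing the two block contributions, the prefactor becomes $2\cdot \tfrac{2}{4\pi^2}=\tfrac{1}{\pi^2}$, which is the first equality claimed in the lemma. The closed-form value $\int_0^1\eta_\varkappa(s)/(s(1-s))\,ds=\tfrac{\pi^2}{6}\,\tfrac{\varkappa+1}{\varkappa}$ is a direct computation (the $\varkappa=1$ case reduces to twice the Euler integral $\int_0^1 -\log(1-s)/s\,ds=\pi^2/6$; the general case is a standard dilogarithm evaluation recorded for instance in~\cite[Lemma~D.2]{bhentropy}), and I would just quote it.

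\emph{Main obstacle.} The only substantive point is the applicability of the scalar Widom--Sobolev asymptotic to symbols $a_\pm$ whose bulk behaviour is smooth (exponentially decaying) rather than piecewise constant. If the cited version does not directly cover this, my plan~B would be to approximate $a_\pm$ by the piecewise constant, compactly supported symbols $\chi_{\R^\pm}\cdot \mathbf{1}_{[-R,R]}$, control the resulting error with Lemma~\ref{lem:ErrorTermsAbstr} (whose Schatten hypothesis~\eqref{cond:chi_Lambda Opa (1-chi_Lambda)} is exactly what Proposition~\ref{prop:cross_smooth} and Lemma~\ref{lem:PaP^cApplied} provide, via the multi-scale structure already developed in Section~\ref{sec:truncated}), and finally let $R\to\infty$. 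The $\alpha\to\infty$ asymptotic for the piecewise-constant symbol is then the classical one-dimensional Szeg\H{o}--Landau--Widom formula, after which the dependence on the smooth exponential tail drops out because it is absorbed into lower-order corrections to $\log\alpha$.
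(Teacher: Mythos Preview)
Your proposal is correct and follows essentially the same route as the paper: split the diagonal symbol into two scalar blocks $a_\pm$, invoke the scalar logarithmic asymptotic from~\cite{arealaw}, and sum. The paper cites~\cite[Corollary~5.10]{arealaw} directly for the symbols $a_\pm$, so your Plan~B approximation argument is unnecessary---that corollary already covers smooth exponentially decaying symbols with a single jump, which is precisely the shape of $a_\pm$.
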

\Proof Having a diagonal symbol, the trace splits into a sum of the entropic differences for scalar-valued symbols. Indeed, setting
\[ a_{1}(\xi):=e^{-\xi}\, \chi_{\mathbb{R}^+}(\xi) \qquad \text{and} \qquad
a_{2}(\xi):=e^{\xi} \,\chi_{\mathbb{R}^-}(\xi) \:, \]
the trace of the entropic difference is given by
\[ \tr  D_\alpha(\eta_\varkappa, \Lambda, \CA_\infty )= \sum_{i=1}^{2}  \tr D_\alpha(\eta, \Lambda, a_{i}) \:. \]
Moreover, the function~$\eta_\varkappa$ satisfies Condition~\ref{cond:f}, as explained in Example~\ref{ex:EtaCondf}. Therefore, we are in the position to apply~\cite[Corollary 5.10]{bhentropy}, obtaining
\begin{align*}
	\lim_{\alpha\to \infty} \frac{1}{\log (\alpha)} \tr D_\alpha(\eta, \Lambda, a_{i})  = \frac{1}{\pi^2}  \,U \big( 1 ; \eta_\varkappa \big)   = \frac{1}{ 2\pi^2} \int_0^1 \frac{\eta_\varkappa(t)}{t(1-t)} \:dt = \frac{1}{12}\frac{\varkappa+1}{\varkappa}\,.
\end{align*}
Summing over~$i$ concludes the proof.
\QED

\subsection{The Off-Diagonal Terms}
It remains to  show that the off-diagonal contribution in~\eqref{contribs} vanishes asymptotically.
More precisely, our task is to show that
\begin{equation}\label{eq:offdiag}
\lim_{\alpha\to\infty}\frac{1}{\log\alpha} \tr\Big(   D_\alpha(\eta_\varkappa, \Lambda, \CA_\alpha ) -   D_\alpha(\eta_\varkappa, \Lambda, \CA_\infty) \Big)=0\,. 
\end{equation}
The main difficulty is related to the fact that the symbols do {\em{not}} converge uniformly in~$\xi$
as~$\alpha \rightarrow \infty$. This is obvious by taking the following limits of~\eqref{Aadef} and~\eqref{Ainf},
\[ \lim_{\alpha \rightarrow \infty} \CA_\alpha(0) = \frac{1}{2}\:
\bigg(\1_{\C^2} - \begin{pmatrix} 0 & 1 \\ 1 & 0 \end{pmatrix} \bigg)
\neq \begin{pmatrix}  1 & 0\\ 0 &  0 \end{pmatrix}
= \lim_{\xi \searrow 0} A_{\infty}(\xi) \:. \]
This difficulty is caused by the region~$|\xi| \lesssim m/\alpha$ of small frequencies.
With this in mind, we must treat the high and low frequencies separately.
We let~$\Theta \in C^\infty_0(\R)$ be a smooth cutoff function with
\[ \supp \Theta \subset [-2,2] \qquad \text{and} \qquad \Theta|_{[0,1]} \equiv 1 \]
and set
\[ \Theta_\alpha(\xi) := \Theta \Big( |\xi|\: \frac{\alpha}{\sqrt{\log \alpha}}\Big)\:. \]
Note that this function is supported for~$|\xi| \leq 2 \sqrt{\log \alpha}/\alpha$.
We decompose the symbols~$\CA_\alpha$ and $\CA_\infty$ as
\[ \CA_\alpha(\xi) = \CA^{>}_\alpha(\xi) + \CA^{<}_\alpha(\xi) \:, \qquad
\CA_\infty(\xi) = \CA^{>}_\infty(\xi) + \CA^{<}_\infty(\xi) \]
with
\begin{align}
\CA^{>}_\bullet(\xi) &:= \big(1 - \Theta_\alpha(\xi) \big)\: \CA_\bullet(\xi) \label{CAG} \\
\CA^{<}_\bullet(\xi) &:= \Theta_\alpha(\xi)\: \CA_\bullet(\xi) \label{CAK}
\end{align}
(where the bullet stands for~$\alpha$ or~$\infty$; for clarity we point out that the
symbols~$\CA^{>}_\infty$ and~$\CA^<_\infty$ depend on~$\alpha$ via the cutoff function~$\Theta_\alpha$).
We next derive the following estimate for the high-frequency contributions.
\begin{Prp}\label{prop:highfreq}
\beq \label{highfreq}
\lim_{\alpha\to\infty}\frac{1}{\log\alpha}\tr \Big( D_\alpha(\eta_\varkappa, \Lambda, \CA_\alpha^> ) -   D_\alpha(\eta_\varkappa, \Lambda, \CA^>_\infty) \Big)=0 \:.
\eeq
\end{Prp}
\begin{proof}
Our claim follows by applying Lemma~\ref{lem:ErrorTermsAbstr}. 
We must verify that all the hypotheses are satisfied. First of all, after treating
the function~$\eta_\varkappa$ defined in~\eqref{eq:eta_gamma} as explained in Example~\ref{ex:EtaCondf},
the Condition~\ref{cond:f} is satisfied with~$\gamma<\min\{1,\varkappa\}$. Next, since the function
\[ \big( \CA^{>}_\alpha - \CA^>_\infty \big)(\xi) = \big(1 - \Theta_\alpha(\xi) \big)\: \big( \CA_\alpha - \CA^>_\infty \big)(\xi) \]
is supported outside the problematic region~$|\xi| \leq 2 \sqrt{\log \alpha}/\alpha$, we have uniform convergence,
\[ \CA^{>}_\alpha - \CA^>_\infty \quad \text{converges uniformly to zero as~$\alpha \rightarrow \infty$} \:, \]	
so that Condition~\eqref{eq:symbolcon} is fulfilled.

It remains to show that Condition~\eqref{cond:chi_Lambda Opa (1-chi_Lambda)} holds.
To this end, we want to apply Proposition~\ref{prop:cross_smooth} for sufficiently small~$r \leq \sigma q$.
We choose the functions
\beq \label{vdef}
v(\xi):=e^{-|\xi|}\qquad \text{and}
\qquad \tau(\xi):=\frac{1}{2}\,\Big( \frac{1}{m} + \frac{1}{|\xi| + \frac{m}{\alpha} } \Big)^{-1} \:.
\eeq
Let us verify that these functions satisfy the required conditions~\eqref{scales:eq}--\eqref{w:eq}
for large~$\alpha$.
First, a direct computation yields for any~$\xi \neq 0$
\[ \big|\partial_\xi \tau (\xi) \big| 
= \frac{\alpha^2 m^2}{2\, \big( m+ \alpha\,(|\xi|+m) \big)^2} \leq \frac{1}{2} \:,
\]
implying~\eqref{Lip:eq} with~$\nu=\frac{1}{2}$, if~$\bxi$ and~$\boldeta$ have the same sign. If they have different signs, we obtain
\[
| \tau (\bxi)- \tau(\boldeta) | \leq |\tau(\bxi)-\tau(0)| + |\tau(0)-\tau(\boldeta)| \leq \frac{1}{2}\, \big(|\bxi|+|\boldeta|\big) = \frac{1}{2}\, |\bxi-\boldeta|\:,
\]
proving~\eqref{Lip:eq}. Next, the estimate~\eqref{w:eq} follows from the estimates
\begin{gather*}
\tau(\xi) \leq \frac{m}{2} \\
\exp\Big( -\frac{m}{2} \Big) \leq e^{-\tau(\xi)} \leq
\frac{v(\xi)}{v(\eta)} = e^{-|\xi| + |\eta|} \leq e^{\tau(\xi)} \leq \exp\Big( \frac{m}{2} \Big) \:.
\end{gather*}
Finally, in order to prove~\eqref{scales:eq}, we compute the derivatives of the symbols~$\CA^{>}_\bullet$
as given by~\eqref{CAG} as well as~\eqref{Aadef} and~\eqref{Ainf} with the product and chain rules.
Since the function~$(1-\Theta_\alpha)$ and all its derivatives vanish unless~$|\xi| \geq
\sqrt{\log \alpha}/\alpha$, we know that, for large~$\alpha$, the factor~$|\xi|$ is bounded from below by~$|\xi| \gg m/\alpha$.
Therefore, each $\xi$-derivative of the symbol~$\CA_\bullet$ gives a scaling factor~$1/|\xi|$.
Using that, for large~$\alpha$,
\beq \label{tauscale}
\frac{1}{\tau(\xi)} = 2\,\Big( \frac{1}{m} + \frac{1}{|\xi| + \frac{m}{\alpha} } \Big) 
\geq \frac{2}{|\xi| + \frac{m}{\alpha}} \geq \frac{4}{|\xi|} \:,
\eeq
this can be bounded by a scaling factor~$1/\tau(\xi)$.
Each $\xi$-derivative acting on the cutoff function~$\Theta_\alpha$, on the other hand, 
gives a scaling factor~$\alpha/\sqrt{\log \alpha}$. In this case, we can use the fact that~$|\xi|$ 
is bounded from above by
\[ |\xi| \leq \frac{2 \alpha}{\sqrt{\log \alpha}} \:. \]
Combining this inequality with~\eqref{tauscale}, we conclude
that every $\xi$-derivative acting on the cutoff function gives again
the desired scaling factor~$1/\tau(\xi)$. We conclude that also the condition~\eqref{scales:eq} holds.

Applying Proposition~\ref{prop:cross_smooth}, we obtain
\[ \big\| \chi_\Lambda\, \Op_1(a_\alpha),(1-\chi_\Lambda) \big\|_{r}^r
\lesssim V_{r,1}(v,\tau)
=\int_{-\infty}^\infty \frac{e^{-r |\xi|}}{\tau(\xi)} \: d\xi = 2 \int_0^\infty \frac{e^{-r \xi}}{\tau(\xi)} \: d\xi \:. \]
In order to estimate the last integral, we split it into a sum of two integrals,
\beq \label{intsum}
\int_0^\infty \frac{e^{-r \xi}}{\tau(\xi)} \: d\xi = 
\int_0^{\frac{m}{\alpha}} \frac{e^{-r \xi}}{\tau(\xi)} \: d\xi + \int_{\frac{m}{\alpha}}^\infty \frac{e^{-r \xi}}{\tau(\xi)} \: d\xi
\eeq
In the first integral on the right, we can use the estimate
\[ \frac{1}{\tau(\xi)} = 2\,\Big( \frac{1}{m} + \frac{1}{\xi + \frac{m}{\alpha} } \Big) \leq 
\frac{2}{m} + \frac{\alpha}{m} \]
to conclude that the integral is uniformly bounded in~$\alpha$. In the last integral in~\eqref{intsum},
on the other hand, we employ the estimate
\[ \frac{1}{\tau(\xi)} = 2\,\Big( \frac{1}{m} + \frac{1}{\xi + \frac{m}{\alpha} } \Big) \leq 
\frac{2}{m} + \frac{2}{\xi} \]
to obtain
\[ \int_{\frac{m}{\alpha}}^\infty \frac{e^{-r \xi}}{\tau(\xi)} \: d\xi
\leq \frac{2}{m r} + 2
\int_{\frac{m}{\alpha}}^\infty \: e^{-r \xi}\: \frac{d\xi}{\xi} \lesssim \log \alpha \:. \]
Using the resulting bound for the function~$g(\alpha)$ in~\eqref{cond:chi_Lambda Opa (1-chi_Lambda)},
we can apply Lemma~\ref{lem:ErrorTermsAbstr} to obtain~\eqref{highfreq}.
\end{proof}

Using Proposition~\ref{prop:highfreq}, we can estimate the left side of~\eqref{eq:offdiag}
as follows,
\begin{align}
\lim_{\alpha\to\infty} & \frac{1}{\log\alpha} \:\Big| \tr\big(   D_\alpha(\eta_\varkappa, \Lambda, \CA_\alpha ) -   D_\alpha(\eta_\varkappa, \Lambda, \CA_\infty) \big) \Big| \notag \\
 &=\lim_{\alpha\to\infty}\frac{1}{\log\alpha}\: 
 \bigg( \Big| \tr \big( D_\alpha(\eta_\varkappa, \Lambda, \CA_\alpha ) -   D_\alpha(\eta_\varkappa, \Lambda, \CA_\alpha^>) \big) \Big| \notag \\
 &\qquad\qquad\;\;\quad 
 + \Big| \tr \big( D_\alpha(\eta_\varkappa, \Lambda, \CA_\infty ) -   D_\alpha(\eta_\varkappa, \Lambda, \CA_\infty^>) \big) \Big| \bigg) \notag \\
 & \leq \lim_{\alpha\to\infty}\frac{1}{\log\alpha}  \Big(  \big\|D_\alpha(\eta_\varkappa, \Lambda, \CA_\alpha ) -   D_\alpha(\eta_\varkappa, \Lambda, \CA^>_\alpha)\big\|_1 \notag \\
 &\qquad\qquad\;\quad 
 + \big\|D_\alpha(\eta_\varkappa, \Lambda, \CA_\infty ) -   D_\alpha(\eta_\varkappa, \Lambda, \CA^>_\infty)
 \big\|_1 \Big) \notag \\
& \leq \lim_{\alpha\to\infty}\frac{1}{\log\alpha}  \Big( \big\|\eta_\varkappa  \big(\chi_{\Lambda }\: \Opa(\CA_{\alpha})\:  \chi_{\Lambda } \big) - \eta_\varkappa \big(\chi_{\Lambda }\: \Opa(  \CA^>_\alpha)  \: \chi_{\Lambda } \big)\big\|_1
\label{t21} \\
& \qquad\qquad\quad\; 
+\big\|\eta_\varkappa  \big(\chi_{\Lambda }\: \Opa(\CA_\infty)\:  \chi_{\Lambda } \big) - \eta_\varkappa \big(\chi_{\Lambda }\: \Opa(  \CA^>_\infty)  \: \chi_{\Lambda } \big)\big\|_1 \label{t22} \\
& \qquad\qquad\quad\; +\big\|\chi_{\Lambda }\: \eta_\varkappa  \big(\Opa(\CA_{\alpha})\big)   \chi_{\Lambda } - \chi_{\Lambda }\: \eta_\varkappa \big(\Opa(  \CA^>_\alpha)\big)  \: \chi_{\Lambda } \big\|_1 \Big) \label{t23} \\
& \qquad\qquad\quad\; +\big\|\chi_{\Lambda }\: \eta_\varkappa  \big(\Opa(\CA_{\infty})\big)   \chi_{\Lambda } - \chi_{\Lambda }\: \eta_\varkappa \big(\Opa(  \CA^>_\infty)\big)  \: \chi_{\Lambda } \big\|_1 \Big) \:. \label{t24}
\end{align}

Our task is to estimate all the obtained summands. We begin with the summands~\eqref{t23}
and~\eqref{t24}. According to~\eqref{PiOpa}, the operator~$\Opa(  \CA^>_\alpha)$ is a multiplication
operator in momentum space.
Therefore, its spectral calculus reduces to that of $2 \times 2$-matrices, which can be worked out explicitly.
Using that the square bracket in~\eqref{Aadef} is a matrix with eigenvalues zero and two,
a straightforward computation gives
\[ \eta_\varkappa \big(\Opa(  \CA_\alpha)\big) = \Opa(  \CB_\alpha ) \]
with
\[ \CB_\alpha(\xi) := \frac{1}{2}\: \eta_\varkappa \Big( e^{-\sqrt{\xi^2+(m/\alpha)^2}}\, \Big)
\bigg(\1_{\C^2} - \frac{1}{  \sqrt{\xi^2 + (m/\alpha)^2}} \begin{pmatrix} -\xi & m/\alpha \\ m/\alpha & \xi \end{pmatrix} \bigg) \:. \]
The operator~$\Opa(  \CA^>_\alpha)$ can be treated similarly. Indeed, using~\eqref{CAG}, we obtain
\[ \eta_\varkappa \big(\Opa(  \CA^>_\alpha)\big) = \Opa(  \CB_\alpha^> ) \]
with
\[ \CB_\alpha(\xi) := \frac{1}{2}\: \eta_\varkappa \Big( (1-\Theta_\alpha)(\xi)\: e^{-\sqrt{\xi^2+(m/\alpha)^2}}\, \Big)
\bigg(\1_{\C^2} - \frac{1}{  \sqrt{\xi^2 + (m/\alpha)^2}} \begin{pmatrix} -\xi & m/\alpha \\ m/\alpha & \xi \end{pmatrix} \bigg) \:. \]
Similarly, using~\eqref{Ainf} and again~\eqref{CAG}, we obtain
\[ \eta_\varkappa \big(\Opa(  \CA_\infty)\big) = \Opa(  \CB_\infty ) \qquad \text{and} \qquad
\eta_\varkappa \big(\Opa(  \CA_\infty^>)\big) = \Opa(  \CB_\infty^> ) \]
with
\begin{align*}
\CB_\infty(\xi) &:= \frac{1}{2}\: \eta_\varkappa \big( e^{-|\xi|}\, \big)\;
\begin{pmatrix}  \chi_{\mathbb{R}^+}(\xi) & 0\\ 0 &  \chi_{\mathbb{R}^-}(\xi) \end{pmatrix} \\
\CB_\infty^>(\xi) &:= \frac{1}{2}\: \eta_\varkappa \big( (1-\Theta_\alpha)(\xi)\: e^{-|\xi|}\, \big)\;
\begin{pmatrix}  \chi_{\mathbb{R}^+}(\xi) & 0\\ 0 &  \chi_{\mathbb{R}^-}(\xi) \end{pmatrix} \:.
\end{align*}
The detailed form of these symbols will not be needed in what follows.
The important point is that the difference~$\CB_\bullet - \CB_\bullet^>$ is bounded and vanishes
for large frequencies; more precisely,
\beq \label{Bprop}
\sup_\xi \big\|(\CB_\bullet-\CB_\bullet^>)(\xi) \big\| \leq 2 \qquad \text{and} \qquad
(\CB_\bullet-\CB_\bullet^>)(\xi) = 0 \quad \text{if~$\displaystyle |\xi| \geq \frac{2 \sqrt{\log \alpha}}{\alpha}$} \:.
\eeq
(where the bullet stands again for~$\alpha$ or~$\infty$).
Using this notation, the summands~\eqref{t23} and~\eqref{t24} can be estimated by
\beq \label{term2}
\Big\|\chi_{\Lambda }\Big( \eta_\varkappa  \big(\Opa(\CA_\bullet \big)
- \eta_\varkappa \big(\Opa(  \CA^>_\bullet)\big) \Big) \: \chi_{\Lambda } \Big\|_1
\leq \Big\| \chi_\Lambda\: \Opa(\CB_\bullet - \CB_\bullet^>) \: \chi_\Lambda \big\|_1 \:.
\eeq
This norm is estimated in the following proposition.
\begin{Prp} \label{prpfinal}
For any~$\sigma \in (0,2)$,
\begin{align*}
 \lim_{\alpha\to\infty}\frac{1}{\log\alpha} \: \big\|\chi_{\Lambda}\, \Opa(\CA^<_\bullet)\, \chi_{\Lambda }
 \big \|_{\sigma}^{\sigma} &= 0 \\
 \lim_{\alpha\to\infty}\frac{1}{\log\alpha} \:\big\|\chi_{\Lambda}\, \Opa(  \CB_\bullet - \CB_\bullet^>
 )\,\chi_{\Lambda}  \big\|_1 &=0\,.
\end{align*}
\end{Prp}
\begin{proof} We want to apply Proposition~\ref{prp72}.
Since this estimate involves~$\text{Op}_1$ instead of~$\Opa$, we need to rescale in momentum space,
Indeed,
\[ \Big\|  \chi_{\Lambda}\, \Opa(\CA^<_{\bullet})\, \chi_{\Lambda } \Big \|_{\sigma}
= \big\| \chi_{\Lambda}\, \text{Op}_1\big(\CA^{<}_\bullet(\alpha^{-1} \:\cdot\,) \big)\big\|_\sigma \:. \]
Equivalently, this means that we need to express the symbols in terms of our original momentum
variable~$k$ (see~\eqref{xiscale}). Applying Proposition~\ref{prp72},
for any~$\sigma\in (0,2)$ we can estimate the Schatten norm by
\[ \Big\|  \chi_{\Lambda}\, \Opa(\CA^<_{\bullet})\, \chi_{\Lambda } \Big \|_{\sigma}
\leq C \:\big|\chi_\Lambda \big|_\sigma \: \big| \CA^{<}_\bullet(\alpha^{-1} \:\cdot\,) \big|_\sigma \:, \]
where~$| \cdot |_\sigma$ is again the norm~\eqref{sigmanorm}.
Using that the matrix-valued symbol is uniformly bounded, $\|\CA^{<}_\bullet(\xi)\| \leq c$,
we thus obtain the estimate
\[ \Big\|  \chi_{\Lambda}\, \Opa(\CA^<_{\bullet})\, \chi_{\Lambda } \Big \|_{\sigma}^\sigma
\leq C^\sigma \:\big|\chi_\Lambda \big|_\sigma^\sigma \int_1^{2 \sqrt{\log \alpha}} c^\sigma\: dk
\lesssim \sqrt{ \log \alpha } \:. \]
Dividing by~$\log \alpha$, the resulting expression tends to zero as~$\alpha \rightarrow \infty$.

The symbol~$\CB_\bullet - \CB^>_\bullet$ can be treated in the same way, because
\begin{align*}
\big\| &\chi_{\Lambda}\, \Opa(  \CB_\bullet - \CB^>_\bullet )\,\chi_{\Lambda}  \big\|_1
\leq C \:\big|\chi_\Lambda \big|_1 \: \big| (\CB_\bullet - \CB_\bullet^>)(\alpha^{-1} \:\cdot\,) \big|_1 \\
&\overset{\eqref{Bprop}}{\leq} C \:\big|\chi_\Lambda \big|_1 \int_1^{2 \sqrt{\log \alpha}} 2\: dk
\leq 2 C \:\big|\chi_\Lambda \big|_1\: \sqrt{\log \alpha} \:. 
\end{align*}
This concludes the proof.
\end{proof}

It remains to estimate the first two summands~\eqref{t21} and~\eqref{t22}.
We point out that, here, the spectral calculus cannot be performed explicitly, because
operators like~$\chi_{\Lambda }\: \Opa(\CA_\infty)\:  \chi_{\Lambda}$ are not multiplication
operators in momentum space, but rather products of multiplication and convolution operators.
For this reason, we need to use the estimate of Proposition~\ref{prop:szego}.
Choosing~$\sigma\in (2/3, 1)$, we obtain
\begin{align}
&\big\| \eta_\varkappa   \big(\chi_{\Lambda }\: \Opa(\CA_{\bullet})  \chi_{\Lambda } \big) - \eta_\varkappa \big(\chi_{\Lambda }\: \Opa(  \CA^>_\bullet)  \: \chi_{\Lambda } \big) \big \|_1 \notag \\
& \leq C\, \big\|  \chi_{\Lambda}\, \big( \Opa(\CA_{\bullet}- \CA^>_\bullet) \big)\, \chi_{\Lambda } \big \|_{\sigma}^\sigma
= C\, \big\|  \chi_{\Lambda}\, \Opa(\CA^<_{\bullet})\, \chi_{\Lambda } \big \|_{\sigma}^{\sigma} \:.
\label{term1}
\end{align} 
The last norm has been estimated in Proposition~\ref{prpfinal}. This concludes the proof of Theorem~\ref{thmmain}.
 
\Thanks{{{\em{Acknowledgments:}} We would like to thank Alexander Sobolev for helpful discussions.
We are grateful to the referee for helpful comments and suggestions.
M.L.\ gratefully acknowledges support by the Studienstiftung des deutschen Volkes and the Marianne-Plehn-Programm. We would like to thank the ``Universit\"atsstiftung Hans Vielberth'' for support.
S.M.\ is partially supported by INFN and by MIUR Excellence Department Project 2023-2027 awarded to the Department of Mathematics of the University of Genoa. This work was written within the activities of the GNFM group of INdAM.

%\bibliographystyle{amsplain}
%\bibliography{../../aarbeit/felix}
\providecommand{\bysame}{\leavevmode\hbox to3em{\hrulefill}\thinspace}
\providecommand{\MR}{\relax\ifhmode\unskip\space\fi MR }
% \MRhref is called by the amsart/book/proc definition of \MR.
\providecommand{\MRhref}[2]{%
  \href{http://www.ams.org/mathscinet-getitem?mr=#1}{#2}
}
\providecommand{\href}[2]{#2}

\end{document}